\newtheorem{theorem}{Theorem}[section]
\newtheorem{proposition}[theorem]{Proposition}
\theoremstyle{definition}
\theoremstyle{remark}
\numberwithin{equation}{section}
\newcommand{\bol}[1]{\mbox{\boldmath$#1$}}
\newcommand{\eqdist}{\stackrel{d}{=}}
\newcommand{\bmu}{\bol{\mu}}
\newcommand{\btheta}{\bol{\theta}}
\newcommand{\bx}{\mathbf{x}}
\newcommand{\bw}{\mathbf{w}}
\newcommand{\bS}{\mathbf{S}}
\providecommand{\keywords}[1]
{
\small	
\textbf{\textit{Keywords}:} #1
}
\title{Volatility Sensitive Bayesian Estimation of Portfolio VaR and CVaR}
\author[1]{Taras Bodnar}
\author[1]{Vilhelm Niklasson\footnote{Corresponding author: Vilhelm Niklasson. E-mail address: niklasson@math.su.se.}}
\author[1]{Erik Thorsén}
\affil[1]{\small\textit{Department of Mathematics, Stockholm University, SE-10691 Stockholm, Sweden}}
\date{}
\begin{document}
\maketitle

\begin{abstract}
In this paper, a new way to integrate volatility information for estimating value at risk (VaR) and conditional value at risk (CVaR) of a portfolio is suggested. The new method is developed from the perspective of Bayesian statistics and it is based on the idea of volatility clustering. By specifying the hyperparameters in a conjugate prior based on two different rolling window sizes, it is possible to quickly adapt to changes in volatility and automatically specify the degree of certainty in the prior. This constitutes an advantage in comparison to existing Bayesian methods that are less sensitive to such changes in volatilities and also usually lack standardized ways of expressing the degree of belief.  We illustrate our new approach using both simulated and empirical data. Compared to some other well known homoscedastic and heteroscedastic models, the new method provides a good alternative for risk estimation, especially during turbulent periods where it can quickly adapt to changing market conditions.
\end{abstract}

\keywords{Conjugate prior; Bayesian inference; Posterior predictive distribution; VaR; CVaR; }

\section{Introduction}\label{intro}
Bayesian statistics was introduced in portfolio analysis in the 1970s with the pioneering works of \citet[]{winkler1973bayesian}, \citet{Barry1974}, and \citet{KleinBawa1976}. One of the main advantages of the Bayesian approach compared to traditional plug-in methods is that the Bayesian methodology makes it possible to employ useful prior information. How to include such prior information has been a topic of interest during the last decades and several different approaches have been suggested. A well known method was proposed by \citet{black1992global} who provided an informal Bayesian model based on market equilibrium arguments. Another popular approach was suggested by \citet{pastor2000portfolio} and \citet{pastor2000comparing} in which asset pricing models were used to include prior information. \citet{tu2010incorporating} also suggested a Bayesian model where priors were imposed on the solution of a problem rather than on the original model parameters. Although these methods provide good ways to specify the prior belief, they do not enable a systematic quantification of the certainty in that belief. This puts the investor in a difficult position since it is generally not obvious how certain he or she should be in the prior. Moreover, these methods focus mainly on finding optimal portfolio weights and it is not easy to calibrate them for different purposes, such as risk estimation. 

Since more than a decade, the Basel regulations require banks to use value at risk (VaR) and conditional value at risk (CVaR) in their risk assessments \citep{basel4}. All trading desks that are bounded by these regulations must perform regular backtesting of their risk models. The backtesting procedure required by Basel is based on a binomial test of a 'hit sequence' where each number in the sequence is 1 if the daily VaR has been exceeded and 0 if it has not been exceeded. Hence it resembles the testing procedure suggested by \citet{christoffersen1998evaluating}. Based on the outcome of the test, different capital requirements are imposed on the bank and models that perform badly will even be prohibited. Many alternative ways of performing VaR and CVaR backtesting have recently been suggested in the literature (see, e.g., \citet{escanciano2012pitfalls}, \citet{ziggel2014new}, \citet{kratz2018multinomial}, and references therein). However, no matter the backtesting procedure, the estimation of VaR and CVaR remains a challenging task, especially, when considering a portfolio consisting of several assets since the number of parameters to estimate grows very fast. 

An important factor which determines the forecast quality is estimation error and its impact on the forecasting procedure.
Recently, \citet{kerkhof2004backtesting} and \citet{du2017backtesting} investigated what effect the estimation uncertainty has on exceedances of VaR in the case of a single asset. Both papers derived the variance of the asymptotic distributions of the suggested test statistics. They deduced that the effect can be large, depending on the ratio of the out-of-sample and in-sample period. However, the application of Bayesian methods completely avoids this issue, since it automatically takes the estimation error into account.

Another reason why risk determination is a challenging task is because the asset returns are not homogeneous. A well-known characteristic of asset returns is that large or small deviations tend to group together in time. This can clearly be seen in financial data from, for instance, the financial crises in 2008 or, more recently, the Covid-19 outbreak in 2020. This phenomenon is known as volatility clustering and it has been documented in numerous articles \citep[see, e.g.,][]{ding1993long, ding1996modeling}. Different models have been suggested to cope with this, most notably the GARCH model \citep{engle1982autoregressive}. However, due to the dimensionality issue of the full multivariate GARCH model, the less general DCC-GARCH model is commonly used to model multiple financial assets \citep{engle2002dynamic}. The model has also been applied to estimate VaR and CVaR of portfolios \citep[see, e.g.,][]{lee2006study, santos2013comparing}. However, fitting a DCC-GARCH model is still quite computationally demanding since it usually involves maximizing a potentially complicated likelihood function. Moreover, there is no guarantee that a global optimum is reached.  

This paper contributes to the current literature on Bayesian portfolio analysis and risk estimation by suggesting a new and simple algorithm for specifying the hyperparameters in a conjugate prior which takes volatility clustering into account. The new method also enables automatic specification of the degree of belief in the prior, hence it resolves the current issue of uncertainty specification. Simulated and empirical data from S\&P 500 are used to illustrate how the suggested method performs and how it compares to other well-known methods with respect to Basel backtesting, especially during the recent Covid-19 outbreak.  
 
The paper is organized as follows. In Section 2, we present a Bayesian model of portfolio returns and we provide expressions of VaR and CVaR derived by using the posterior predictive distribution. Section 3 presents the new algorithm for specifying the hyperparameters in the conjugate prior. In Section 4, the Basel procedure for backtesting VaR and CVaR is described. Section 5 contains a simulation study where the new method is compared to other existing approaches. The comparison is continued in Section 6 with real market data. Finally, Section 7 contains the conclusions.
\section{A Bayesian model of portfolio returns and estimation of VaR and CVaR}\label{sec:model}
Let $\bx_t$ be the $k$-dimensional vector of asset returns at time $t$ and let $\bx_{(t-1)}=(\bx_{t-n},...,\bx_{t-1})$ stand for the observation matrix of the asset returns $\bx_{t-n}$, ..., $\bx_{t-1}$ taken from time $t-n$ until $t-1$. The vector of portfolio weights is denoted by $\bw$ and it determines how much the investor owns of each asset. We assume that the whole investor's wealth is shared between the selected assets, i.e., $\mathbf{1}_k^\top \bw=1$ where $\mathbf{1}_k$ denotes the $k$-dimensional vector of ones. The return of the portfolio with weights $\bw$ at time point $t$ is then given by
\begin{equation}\label{eq:portfolio_return}
    X_{P,t} = \bw^\top\bx_t.
\end{equation}

From the Bayesian perspective, the posterior predictive distribution of $X_{P,t}$, i.e., the conditional distribution of $X_{P,t}$ given $\bx_{(t-1)}$, is computed by \citep[see, e.g., p. 244 in][]{bernardo2000bayesian}
\begin{equation}
    f(x_{P,t}|\bx_{(t-1)}) = \int_{\theta \in \Theta} f(x_{P,t}|\btheta)\pi(\btheta|\bx_{(t-1)})d\btheta,
    \label{eq:posterior}
\end{equation}
where $\Theta$ denotes the parameter space, $f(\cdot|\btheta)$ is the conditional density of $X_{P,t}$ given $\btheta$, and $\pi(\btheta|\bx_{(t-1)})$ stands for the posterior distribution of $\btheta$ given $\bx_{(t-1)}$. The posterior predictive distribution possesses a number of applications, especially in the prediction of the future realization of the portfolio return. In fact, any Bayesian point estimate, like the posterior predictive mean, posterior predictive mode, or posterior predictive median, reflects the investor expectation about the future portfolio return.

Assume that the asset returns $\bx_i, i \in \{t-n,...,t\}$, are multivariate normally distributed and independent when conditioning on mean vector $\bm{\mu}$ and covariance matrix $\bm{\Sigma}$. The conjugate prior is then given by
\begin{equation}\label{eq:conjugate_prior}
    \bmu|\bm{\Sigma} \sim N_k\left(\mathbf{m}_0,\frac{1}{r_0}\bm{\Sigma}\right)
\quad \text{and} \quad
    \bm{\Sigma} \sim IW_k (d_0, \bS_0),
\end{equation}
where $\bm{m}_0$, $r_0$, $\bS_0$ and $d_0$ are hyperparameters, $N_k\left(\mathbf{m}_0,\frac{1}{r_0}\bm{\Sigma}\right)$ denotes the multivariate normal distribution with mean vector $\mathbf{m}_0$ and covariance matrix $\frac{1}{r_0}\bm{\Sigma}$, and $IW_k(d_0, \bS_0)$ denotes the inverse Wishart distribution with $d_0$ degrees of freedom and parameter matrix $\bS_0$.

It can be shown that the posterior predictive distribution \eqref{eq:posterior} under these assumptions is a generalized Student's $t$-distribution when using the conjugate prior \citep[see, e.g.,][]{winkler1973bayesian, BodnarLindholmNiklassonThorsen2020}. More precisely, let $\widehat{X}_{P,t}$ be a random variable which follows the posterior predictive distribution, then
\begin{equation}\label{eq:stochastic_rep_common}
\widehat{X}_{P,t} \eqdist \mathbf{w}^T\bar{\bx}_{t-1}+\tau(d_{k,n})\sqrt{r_{k,n}}\sqrt{\mathbf{w}^T\bS_{t-1}\mathbf{w}},
\end{equation}
where the right hand side is implicitly conditioned on observed data, $\tau(d_{k,n})$ denotes a random variable following the standard $t$-distribution with $d_{k,n} = n+d_0-2k$ degrees of freedom and
\begin{equation}\label{eq:conjugate_r_kn_and_x_t_minus_1}
r_{k,n}=\frac{n+r_0+1}{(n+r_0)(n+d_0-2k)}, \quad 
\bar{\bx}_{t-1} = \frac{n\bar{\bm{x}}+r_0\mathbf{m_0}}{n+r_0},
\end{equation}
\begin{equation}\label{eq:conjugate_S_t_minus_1}
\bS_{t-1} = \sum_{i=t-n}^{t-1}(\bx_i-\bar{\bm{x}})(\bx_i-\bar{\bm{x}})^T+\bS_0+nr_0\frac{(\mathbf{m_0}-\bar{\bx}_{t-1})(\mathbf{m_0}-\bar{\bx}_{t-1})^\top}{n+r_0},
\end{equation}
where $\bar{\bm{x}}=\frac{1}{n}\sum_{i=t-n}^{t-1}\bx_i$ denotes the sample mean.

Using \eqref{eq:stochastic_rep_common}, it follows directly by the definitions of VaR and CVaR at level $\alpha \in (0.5,1)$ that the portfolio risk using these risk measures (commonly denoted by $Q$) is given by
\begin{equation}\label{eq:VaR_CVaR_common}
    Q_{t-1}(\bw) =-\mathbf{w}^T\bar{\bx}_{t-1} +q_{\alpha}\sqrt{r_{k,n}}\sqrt{\mathbf{w}^T\bS_{t-1}\mathbf{w}},
\end{equation}
where $q_{\alpha}$ depends on if VaR or CVaR is considered. More precisely, let $d_{\alpha}(d_{k,n})$  be the $\alpha$ quantile of the $t$-distribution with $d_{k,n}$ degrees of freedom. Then 
\begin{equation}\label{eq:q_alpha_VaR}
    q_{\alpha} = d_{\alpha}(d_{k,n})
\end{equation}
when considering VaR, and 
\begin{equation}\label{eq:q_alpha_CVaR}
 q_{\alpha} = \frac{1}{1-\alpha}\frac{\Gamma\left(\frac{d_{k,n}+1}{2}\right)}
    {\Gamma\left(\frac{d_{k,n}}{2}\right)
    \sqrt{\pi d_{k,n}}}
    \frac{d_{k,n}}{d_{k,n}-1}
    \left(1+\frac{(d_{\alpha}(d_{k,n}))^2}{d_{k,n}}\right)^{-\frac{d_{k,n}-1}{2}}
\end{equation}
when considering CVaR \citep[see, ][for details]{BodnarLindholmNiklassonThorsen2020}.
\section{Volatility sensitive conjugate hyperparameters}
The conjugate prior defined in \eqref{eq:conjugate_prior} is an informative prior which has been used extensively in the Bayesian financial literature and many different ways have been suggested on how to specify its four hyperparameters. Some of the methods use only historical data of the assets returns following the empirical Bayes approach \citep[see, e.g.,][] {frost1986empirical, kolm2017bayesian, bauder2018bayesian} whereas others use additional input, for instance, from asset pricing models \citep[see, e.g.,][] {pastor2000portfolio, pastor2000comparing}.

Using only historical data where all vectors of the asset returns are of equal importance, as in the empirical Bayes approach, works well when the market conditions are stable. However, it is well known that markets behave irregularly and volatility clustering is often observed in financial data. Existing asset pricing models usually focus on the mean behaviour of asset returns, while a little attention is paid to capture volatility clustering \citep[see, e.g,][]{pastor2000portfolio}.
Manually expressing the hyperparameters based on a personal belief is also very difficult. For moderate portfolio sizes the covariance matrix is hard to specify, as it includes a positive definite constraint which is highly nonlinear. Moreover, it is very difficult to specify the degree of belief in the covariance matrix. 

We suggest a new automatic way to specify the hyperparameters which deals with all of the above issues. In order to capture volatility clustering it is of particular importance to specify the hyperparameters $d_0$ and $\bm{S}_0$ since they determine the prior distribution of $\bm{\Sigma}$. The idea behind the suggested approach is to make the specification of these parameters based on a comparison of the long and short term behaviour of the portfolio variance. Let $n$ be the period corresponding to the long term and let $n_r\le n$ correspond to the most recent period. The new method of specifying $d_0$ and $\bm{S}_0$ can then be defined as in Algorithm \ref{alg:conjugate_hyperparameters}.
 
\begin{algorithm}
\caption{Volatility sensitive conjugate hyperparameters}
\label{alg:conjugate_hyperparameters}
\SetAlgoLined
    \begin{enumerate}
    \item Calculate the sample estimates $\bm{\sigma}$ of the stock standard deviations using the recent $n$ observation vectors of the asset returns.\\
	 \item Calculate the sample estimates $\bm{\sigma}_r$ of the stock standard deviations using the recent $n_r$ observation vectors of the asset returns together with the long term sample mean. \\
	\item Define $\bm{D}$  to be a diagonal matrix with diagonal elements given by $\bm{\sigma}_r/\bm{\sigma}$, where the division is element-wise. \\
	\item Calculate the sample covariance matrix $\hat{\bm{\Sigma}}$ using the recent $n$ observation vectors of the asset returns. \\
	\item Define $\hat{\bm{\Sigma}}_r = \bm{D}\hat{\bm{\Sigma}}\bm{D}$ to be the sample covariance matrix with variances based on the recent period. \\
	\item Define $V_{r,\bm{w}}=\bm{w}^{\top}\hat{\bm{\Sigma}}_r\bm{w}$ and $V_{\bm{w}}=\bm{w}^{\top}\hat{\bm{\Sigma}}\bm{w}$ to be the estimated portfolio variances based on the short and long terms, respectively.
	\item Set $d_0 = \max\left(k+2, n \left(\max\left(1, \frac{V_{r,\bm{w}}}{V_{\bm{w}}}\right)\right)^h \left(\max\left(1, \frac{V_{\bm{w}}}{V_{r,\bm{w}}}\right)\right)^l\right)$.\\
	\item Set $\bm{S}_0 = \frac{(d_0-k-1)(n-1)}{n} \hat{\bm{\Sigma}}_{r}$.
	\end{enumerate}
\end{algorithm}

Specifying $\bm{S}_0 $ according to Algorithm \ref{alg:conjugate_hyperparameters} means that our prior belief about $\bm{\Sigma}$ is that the variances will be similar to what has been observed recently whereas the prior correlations are based on the long term behaviour of the stocks. Moreover, our degree of belief, which is controlled by $d_0$, is determined by how much the portfolio variance differs between the long and short term periods. Hence more pronounced deviations from historical data could motivate a greater or smaller belief in the recent observations depending on the signs and magnitudes of $l$ and $h$.
 
To better understand the implication of the algorithm, note that the second summand of VaR and CVaR estimates given by \eqref{eq:VaR_CVaR_common} can be rewritten as
\begin{align}
    q_{\alpha} \sqrt{r_{k,n}} \sqrt{\bm{w}^\top \bm{S}_{t-1} \bm{w}} & =  q_{\alpha} \sqrt{r_{k,n} (n-1)} \sqrt{\frac{1}{n-1}\bm{w}^\top \bm{S}_{t-1} \bw} \\
    &=q_{\alpha} \sqrt{\frac{(n+r_0+1)(n-1)}{(n+r_0)(n+d_0-2k)}} \sqrt{\frac{1}{n-1}\bm{w}^\top \bm{S}_{t-1} \bw}.\label{eq:portfolio_quantile}
\end{align}
For large values of $n$ and if $V_{r,\bm{w}}> V_{\bm{w}}$, we have that $d_0=n(V_{r,\bm{w}}/V_{\bm{w}})^h$ and
\begin{equation}
\sqrt{\frac{(n+r_0+1)(n-1)}{(n+r_0)(n+d_0-2k)}} \approx \sqrt{\frac{1}{1+(V_{r,\bm{w}}/V_{\bm{w}})^h}}.
\end{equation}
Moreover, if $\bm{m}_0$ is set to be the sample mean, from \eqref{eq:portfolio_quantile} and \eqref{eq:conjugate_S_t_minus_1} we get
\begin{equation}
\frac{1}{n-1}\bm{w}^\top \bm{S}_{t-1} \bm{w} = V_{\bm{w}} + \frac{d_0-k-1}{n} V_{r,\bm{w}}  \approx V_{\bm{w}} + (V_{r,\bm{w}}/V_{\bm{w}})^h V_{r,\bm{w}}.
\end{equation}
Thus, \eqref{eq:portfolio_quantile} becomes approximately equal to
\begin{equation}\label{eq:convex_arrrox_rep}
q_\alpha \sqrt{\frac{1}{1+(V_{r,\bm{w}}/V_{\bm{w}})^h}V_{\bm{w}} + \frac{(V_{r,\bm{w}}/V_{\bm{w}})^h}{1+(V_{r,\bm{w}}/V_{\bm{w}})^h} V_{r,\bm{w}}},
\end{equation}
which can be simplified to 
\begin{equation}
q_\alpha \sqrt{V_\bw}\sqrt{\frac{1+(V_{r,\bm{w}}/V_{\bm{w}})^{h+1}}{1+(V_{r,\bm{w}}/V_{\bm{w}})^h}}.
\end{equation}
The quantity inside the square root of \eqref{eq:convex_arrrox_rep} is a convex combination between the variances based on the recent period and the long term period where the weights between the periods are determined by $(V_{r,\bm{w}}/V_{\bm{w}})$ and $h$. An analogous reasoning can be made when $V_{r,\bm{w}} < V_{\bm{w}}$.

Algorithm \ref{alg:conjugate_hyperparameters} has several interesting properties. Some of these properties concern comparisons to the empirical Bayes estimate of $\bm{S}_0$ which is given by \citep[see, e.g.,][]{bauder2020bayesian}
\begin{equation}\label{eq:empirical_bayes_S_0}
\bm{S}_0=\frac{(d_0-k-1)(n-1)}{n}\hat{\bm{\Sigma}}.
\end{equation}
We summarize the properties in Proposition \ref{prop:properties} below. Following the empirical Bayes approach we set $\bm{m}_0=\bar{\bx}$.
\begin{proposition}\label{prop:properties} 
Let $\bS_{0}$ and $d_0$ be determined by Algorithm \ref{alg:conjugate_hyperparameters}. Then, it holds that
\begin{enumerate}
    \item VaR and CVaR estimates of a portfolio using the new method are monotonically increasing with respect to $h$ if $n\rightarrow \infty$.
    \item VaR and CVaR estimates of a portfolio using the new method are monotonically decreasing with respect to $l$ if $n\rightarrow \infty$.
    \item The new method is equivalent to empirical Bayes if $n_r = n$, $n\ge k+2$ and $d_0=n$ is used in empirical Bayes.
    \item VaR and CVaR estimates of a portfolio using the new method are always greater than corresponding empirical Bayes estimates when $V_{r,\bm{w}}>V_{\bm{w}}$, $d_0=n$ is used in empirical Bayes and $n \rightarrow \infty$.
    \item VaR and CVaR estimates of a portfolio using the new method are always smaller than corresponding empirical Bayes estimates when $V_{r,\bm{w}}< V_{\bm{w}}$, $d_0=n$ is used in empirical Bayes and $n \rightarrow \infty$.
\end{enumerate}
\end{proposition}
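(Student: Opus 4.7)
Since $\bm{m}_0=\bar{\bx}$ forces $\bar{\bx}_{t-1}=\bar{\bx}$ in \eqref{eq:conjugate_r_kn_and_x_t_minus_1}, the linear term $-\bm{w}^\top\bar{\bx}_{t-1}$ in \eqref{eq:VaR_CVaR_common} is independent of the prior covariance hyperparameters. All five assertions therefore reduce to statements about the quantile term $q_\alpha\sqrt{r_{k,n}\,\bm{w}^\top\bS_{t-1}\bm{w}}$. My strategy is to lean on the large-$n$ approximation the authors already derive after Algorithm~\ref{alg:conjugate_hyperparameters} for the regime $V_{r,\bm{w}}>V_{\bm{w}}$, to carry out the analogous asymptotic in the mirror regime $V_{r,\bm{w}}<V_{\bm{w}}$, and to reduce each claim to an elementary one-variable calculus exercise.

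\textbf{Parts 1 and 2 (monotonicity).} For $V_{r,\bm{w}}>V_{\bm{w}}$ put $\rho=V_{r,\bm{w}}/V_{\bm{w}}>1$. The paper already supplies the large-$n$ quantile $q_\alpha\sqrt{V_{\bm{w}}}\sqrt{(1+\rho^{h+1})/(1+\rho^h)}$. Setting $u=\rho^h$ turns the inner ratio into $(1+\rho u)/(1+u)$, whose $u$-derivative equals $(\rho-1)/(1+u)^2>0$, and $u$ is itself increasing in $h$ since $\rho>1$, so the quantile is strictly increasing in $h$. In the complementary regime $V_{r,\bm{w}}\le V_{\bm{w}}$, step~7 of Algorithm~\ref{alg:conjugate_hyperparameters} makes $d_0$ independent of $h$, so the estimate is constant in $h$; combining the two regimes gives part~1. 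For part~2 I redo the authors' derivation with $V_{r,\bm{w}}$ and $V_{\bm{w}}$ swapped, being careful that $\bS_0\propto\hat{\bSigma}_r$ so that $V_{r,\bm{w}}$ continues to multiply the $(d_0-k-1)/n$ contribution to $\bm{w}^\top\bS_{t-1}\bm{w}/(n-1)$. With $\rho'=V_{\bm{w}}/V_{r,\bm{w}}>1$ this yields the large-$n$ quantile $q_\alpha\sqrt{V_{\bm{w}}}\sqrt{(1+{\rho'}^{l-1})/(1+{\rho'}^{l})}$. Substituting $u={\rho'}^l$ turns the inner ratio into $(1+u/\rho')/(1+u)$ with $u$-derivative $(1/\rho'-1)/(1+u)^2<0$, so the quantile is strictly decreasing in $l$; in the complementary regime $V_{r,\bm{w}}\ge V_{\bm{w}}$, $d_0$ is independent of $l$.

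\textbf{Part 3 (equivalence).} With $n_r=n$ both $\bm{\sigma}_r$ and $\bm{\sigma}$ are computed from the same $n$ observations using the same (long-term) sample mean, so $\bm{D}=\bm{I}$, $\hat{\bSigma}_r=\hat{\bSigma}$, and $V_{r,\bm{w}}=V_{\bm{w}}$. Both maxima in step~7 equal $1$, hence $d_0=\max(k+2,n)=n$ by the hypothesis $n\ge k+2$, and step~8 returns $\bS_0=\tfrac{(n-k-1)(n-1)}{n}\hat{\bSigma}$, which coincides with \eqref{eq:empirical_bayes_S_0} at $d_0=n$. Identical priors give identical posterior predictive distributions, hence identical VaR and CVaR estimates.

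\textbf{Parts 4--5 and main obstacle.} The empirical Bayes prior with $d_0=n$ fits the same asymptotic framework: repeating the derivation with $\hat{\bSigma}_r$ replaced by $\hat{\bSigma}$ (equivalently, $\rho=1$) produces the simpler limit $q_\alpha\sqrt{V_{\bm{w}}}$. Dividing the new-method limit by the empirical-Bayes limit gives $\sqrt{(1+\rho^{h+1})/(1+\rho^h)}>1$ when $V_{r,\bm{w}}>V_{\bm{w}}$, since $\rho^{h+1}>\rho^h$ whenever $\rho>1$, and $\sqrt{(1+{\rho'}^{l-1})/(1+{\rho'}^l)}<1$ when $V_{r,\bm{w}}<V_{\bm{w}}$, since ${\rho'}^{l-1}<{\rho'}^l$ whenever $\rho'>1$; these are exactly the inequalities of parts~4 and 5. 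The only genuinely non-routine step is the mirror asymptotic used in parts~2 and 5 — the paper only documents the $V_{r,\bm{w}}>V_{\bm{w}}$ side — where one must be attentive to the fact that $\bS_0$ is proportional to $\hat{\bSigma}_r$ rather than $\hat{\bSigma}$, so $V_{r,\bm{w}}$ (not $V_{\bm{w}}$) enters the $(d_0-k-1)/n$ term. Once that derivation is in place, every subsequent conclusion collapses to single-variable calculus.
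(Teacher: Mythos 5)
Your proof is correct and follows essentially the same route as the paper's: reduce every claim to the second summand of \eqref{eq:VaR_CVaR_common} (since $\bm{m}_0=\bar{\bx}$ fixes the linear term), invoke the large-$n$ convex-combination limit \eqref{eq:convex_arrrox_rep} with $q_\alpha$ tending to a normal quantile for parts 1, 2, 4 and 5, and observe that $n_r=n$ forces $\bm{D}=\bm{I}$, hence $d_0=n$ and agreement with \eqref{eq:empirical_bayes_S_0}, for part 3. Your only addition is to carry out explicitly the mirror asymptotic $q_\alpha\sqrt{V_{\bm{w}}}\sqrt{\bigl(1+{\rho'}^{l-1}\bigr)/\bigl(1+{\rho'}^{l}\bigr)}$ in the regime $V_{r,\bm{w}}<V_{\bm{w}}$, which the paper dismisses with ``similar arguments''; that computation, and the accompanying one-variable calculus, checks out.
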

\begin{proof} $ $\newline
\begin{enumerate}
\item Only the case $V_{r,\bm{w}}>V_{\bm{w}}$ has to be considered since $h$ will not matter otherwise. Note that \eqref{eq:convex_arrrox_rep} will become the limit of \eqref{eq:portfolio_quantile} when $n\rightarrow \infty$ and $q_{\alpha}$ will be given by a normal distribution. Increasing $h$ in this limit means that the factor in front of $V_{r,\bm{w}}$ will become larger in the convex combination between $V_{\bm{w}}$ and $V_{r,\bm{w}}$. Hence the result follows since $V_{r,\bm{w}}>V_{\bm{w}}$, and VaR and CVaR estimates given by \eqref{eq:VaR_CVaR_common} are increasing with respect to the second summand.  

\item This follows from similar arguments as Property 1.

\item Setting $n_r=n$ in Algorithm \ref{alg:conjugate_hyperparameters} means that $\bm{D}$ will be the identity matrix and thus $d_0$ will be equal to $n$ in the new method. Furthermore, $\bm{S}_0$ given by the new method will coincide with the empirical Bayes estimate of $\bm{S}_0$ in \eqref{eq:empirical_bayes_S_0}. Hence the identity holds. 

\item Using empirical Bayes, $\bS_0$ is given by \eqref{eq:empirical_bayes_S_0} together with $d_0 = n$. It holds that \eqref{eq:portfolio_quantile} converges to $q_{\alpha} \sqrt{V_{\bm{w}}}$ as $n \rightarrow \infty$,
where $q_{\alpha}$ corresponds to a quantile of the standard normal distribution. The quantity $q_{\alpha} \sqrt{V_{\bm{w}}}$ is smaller than the corresponding quantity \eqref{eq:convex_arrrox_rep} when using the new method whenever $V_{r,\bm{w}} > V_{\bm{w}}$. Thus the result follows since VaR and CVaR estimates given by \eqref{eq:VaR_CVaR_common} are increasing with respect to the second summand. 
\item This follows from similar arguments as Property 4.
\end{enumerate}
\end{proof}

The new method of setting the hyperparameters in the conjugate prior requires the specification of three additional parameters: $n_r$, $h$ and $l$. The parameter $n_r$ controls how long the short term period should be whereas the parameters $h$ and $l$ determine how much recent high or low portfolio variances (compared to the long run) should impact our certainty. These hyperparameters can be specified using some loss function together with historical data or they can be specified based on personal belief. Compared to the original hyperparameters, these new parameters are likely easier to interpret, especially for someone who is not very familiar with Bayesian statistics.

Finally, it should be mentioned that the hyperparameters $r_0$ and $\bm{m}_0$ are of less importance for capturing volatility clustering and thus are not included in Algorithm \ref{alg:conjugate_hyperparameters}. The prior belief about the mean, $\bm{m}_0$, can be specified according to the empirical Bayes method as suggested before Proposition \ref{prop:properties}. It is also possible to specify $\bm{m}_0$ based on some market equilibrium arguments or capital asset pricing models \citep[see, e.g.,][]{black1992global, pastor2000comparing}. The hyperparameter $r_0$ controls the uncertainty of the mean and can be set to some constant based on personal belief.

\section{Basel backtesting of VaR and CVaR}
Basel requires banks to report CVaR at the 97.5 \% level and do backtesting of it using VaR estimates at levels 97.5 \% and 99 \% \citep{basel4}. The test is commonly referred to as the 'traffic light test' since the outcome of the test is classified as different colors. Basically, the test is a standard binomial test based on a hit sequence,  i.e., a series indicating exceedance or non-exceedance of VaR for each trading day. Let $\{I_t\}_{0\le t \le T}$ denote such a hit sequence between times $0$ and $T$, i.e., each element $I_t$ is 
\begin{equation}
\begin{cases}
    1 \text{ if the portfolio return at time } t \text{ is smaller than the estimated -VaR,}\\
    0 \text{ otherwise.}
\end{cases}
\end{equation}
By definition, $1-\alpha$ of the portfolio returns should be smaller than -VaR at level $\alpha$ when VaR is correctly determined. The test hypotheses expressing the probability of 'success' in the binomial distribution are
\begin{equation}\label{hyp-VaR}
H_0:\, p_I= 1-\alpha
\quad \text{against} \quad
H_1:\, p_I\neq 1-\alpha .\end{equation}

Basel requires banks to calculate the number of VaR exceedances during one year (about 250 trading days), i.e., the number of cases when the method of VaR estimation fails to provide a good forecast. Let $C_{T}=\sum_{t=1}^T I_t$ with usually $T=250$ and let $c$ denote the observed value of $C_T$ computed by using the data at hand. Then the cumulative probability of the number of exceedances, i.e., the number of method failures, is computed under the null hypothesis and it is given by
\[P_T=I\!\!P(C_T \le c)
\quad \text{with} \quad
C_T \sim \text{Binomial}(T,1-\alpha).\]
If the probability for the number of exceedances 
$P_T$ is smaller than 95 \%, then the result is classified as 'Green'. It is classified as 'Amber' if the probability $P_T$ is between 95 \% and 99.99 \%. Otherwise, if the probability $P_T$ is greater than 99.99 \%, the result is classified as 'Red'\footnote{Basel only explicitly specify the colors for VaR backtests at the 99 \% level but we have generalized it also to the 97.5 \% level since this level is also required when doing backtesing.}. The number of exceedances and the classification zones will impact the capital requirements and possibly make a model invalid to use.

\section{Simulation study}
We now illustrate how the new approach performs with respect to VaR estimation when using a portfolio consisting of several assets. Three different simulation scenarios are considered and the proposed approach is compared to three other VaR estimation methods using the Basel backtest.

\subsection{Simulation setup}
Different simulation scenarios for the stock returns are considered in order to see how the new approach performs under different market assumptions. The three scenarios are referred to as:
\begin{enumerate}
    \item Multivariate normal distribution (MVN)
    \item Perturbed multivariate normal distribution (PMVN)
    \item Multivariate GARCH (MGARCH)
\end{enumerate}
In the MVN scenario, the simulated returns follow a multivariate normal distribution with fixed mean vector and covariance matrix. 

The PMVN scenario is similar to the MVN scenario with the exception that the standard deviations corresponding to the multivariate normal distribution potentially deviate from the basis values during certain periods of time. The first step when implementing this scenario is to randomly select a period length. The length of a period is either three, four or five days with equal probabilities. Next, the period is classified as 'low volatility', 'normal volatility' or 'high volatility' with probabilities 0.05, 0.9 and 0.05, respectively. If the period is classified as a 'low volatility' period, the basis standard deviations are multiplied by uniformly distributed random variables that take values between 0.5 and 0.7. Similarly, if it is a 'high volatility' period, the standard deviations are scaled up by factors that are uniformly distributed between 1.5 and 3. During 'normal volatility' periods the standard deviations are not scaled at all. 

The MGARCH method uses the DCC-GARCH(1,1) method with normally distributed residuals to generate vectors of the asset returns.

The parameters of the models in the different simulation scenarios are determined by the corresponding estimators obtained by fitting the model to real market data. We use randomly selected market data from stocks in the S\&P 500 index between 2018 and 2020. Once the parameters have been estimated, 500 returns are generated using the simulation scenario in consideration of which the last 250 are used to evaluate the models VaR estimations according to Basel with a rolling window size of 250 days. In addition to the new approach based on the volatility sensitive conjugate method, the VaR models in consideration are the empirical Bayes conjugate method (referred to as EB) with $d_0=r_0=n$ \citep[see,][]{BodnarLindholmNiklassonThorsen2020}, the method based on the samples estimates \citep[see, e.g,][]{alexander2002economic} and the DCC-GARCH(1,1) method \citep[see, e.g.,][]{lee2006study}. 

Two different parameter combinations of $n_r$, $h$ and $l$ are considered for the new volatility sensitive conjugate method. In the first combination we use $n_r = 4$, $h=2$ and $l=0$ and it is referred to as VS(4,2,0). This should reflect the parameters of an investor who is risk averse since the risk is scaled up quickly when high volatilities are observed but it is not scaled down so much when the recent volatilities are low. In the second combination we use  $n_r = 4$, $h=0$ and $l=0$ and it is referred to as VS(4,0,0). This should reflect the setting of an investor who has a fixed degree of belief in the prior distribution of the covariance matrix. In both cases we use the empirical Bayes approach to specify $\bm{m}_0$ and we set $r_0$ equal to $n$.

Equally weighted portfolios of three different sizes are considered, namely 5, 10 and 15. For each portfolio size and simulation scenario we generate 100 series of returns. This is to make the comparison more rigorous and avoid the common mistake of sherry picking stocks or parameters. Hence in total we get 100 outcomes from the Basel backtest for each simulation scenario and each VaR estimation model.

\subsection{Simulation results}
Table \ref{tab:sim_normal} shows the proportion of VaR estimations classified as 'Green', 'Amber' and 'Red' in the Basel backtest using the different estimation methods when returns are simulated using MVN. It is observed in Table \ref{tab:sim_normal} that all risk estimation methods perform quite similarly when the returns follow a multivariate normal distribution. However, a close look reveals that the estimation methods based on the volatility sensitive conjugate method and the empirical Bayes conjugate method might slightly overestimate the risk whereas the methods based on the sample estimates and DCC-GARCH(1,1) might slightly underestimate the risk.  

\begin{singlespace}
\begin{table}[H]
\footnotesize
\begin{mdframed}[backgroundcolor=black!10,rightline=false,leftline=false]
\centering
\caption{Proportion of VaR estimations classified as 'Green', 'Amber' and 'Red' in the Basel backtest using simulated multivariate normal returns.  }
\label{tab:sim_normal}
{\begin{tabular}{| c | c | c | c c c c c |}
  \hline
  \makecell{VaR \\level} & \makecell{Portfolio\\ size} & \makecell{Basel \\ zone} & \makecell{VS\\(4,2,0)} & \makecell{VS\\(4,0,0)} & EB & Sample & \makecell{DCC-GARCH\\(1,1)}\\
  \hhline{|=|=|=|=====|}
    \multirow{9}{*}{97.5 \%} & \multirow{4}{*}{5} & Green & 1.00  & 0.99  & 0.97 & 0.96 & 0.94 \\
    & & Amber & 0.00 & 0.01  & 0.03 & 0.04 & 0.06 \\
    & & Red & 0.00 & 0.00  & 0.00 & 0.00 & 0.00 \\
    \cline{2-8}
    & \multirow{4}{*}{10} & Green & 1.00 & 0.99 & 0.97 & 0.97 & 0.92 \\
    & & Amber & 0.00 & 0.01 & 0.03 & 0.03 & 0.08\\
    & & Red & 0.00 & 0.00 & 0.00 & 0.00 & 0.00 \\
    \cline{2-8}
    & \multirow{4}{*}{15} & Green & 1.00 & 1.00 & 0.98 & 0.97 & 0.95 \\
    & & Amber & 0.00 & 0.01 & 0.02 & 0.03 & 0.05\\
    & & Red & 0.00 & 0.00 & 0.00 & 0.00 & 0.00\\
    \cline{1-8}
    \multirow{9}{*}{99 \%} & \multirow{4}{*}{5} & Green & 0.95  & 0.92  & 0.93 & 0.90 & 0.92\\
    & & Amber & 0.05 & 0.08  & 0.07 & 0.10 & 0.08 \\
    & & Red & 0.00 & 0.00  & 0.00 & 0.00  & 0.00 \\
    \cline{2-8}
    & \multirow{4}{*}{10} & Green & 0.96 & 0.95 & 0.90 & 0.89 & 0.87\\
    & & Amber & 0.04 & 0.05 & 0.10 & 0.11 & 0.13 \\
    & & Red & 0.00 & 0.00 & 0.00 & 0.00 & 0.00\\
    \cline{2-8}
    & \multirow{4}{*}{15} & Green & 0.93 & 0.93 & 0.88 & 0.85 & 0.88\\
    & & Amber & 0.07 & 0.07 & 0.12 & 0.15 & 0.12 \\
    & & Red & 0.00 & 0.00 & 0.00 & 0.00 & 0.00\\
    \cline{1-8}
\end{tabular}}{}
\end{mdframed}
\end{table}
\end{singlespace}

Similar results but using the perturbed multivariate normal distribution to generate the returns are shown in Table \ref{tab:sim_pnormal}. We observe that the different methods perform quite similarly for the lower VaR level when using perturbed multivariate normal returns, although the new volatility sensitive conjugate method might be a bit too risk conservative. For the higher VaR level we note that the new method has a clear advantage. This is because this method can easily capture rapid changes in the market conditions corresponding to the 'high volatility' periods. Surprisingly, the DCC-GARCH(1,1) which is also a heteroscedastic model, does not seem to capture such changes better than the method based on the sample estimates.

\begin{singlespace}
\begin{table}[H]
\footnotesize
\begin{mdframed}[backgroundcolor=black!10,rightline=false,leftline=false]
\centering
\caption{Proportion of VaR estimations classified as 'Green', 'Amber' and 'Red' in the Basel backtest using simulated perturbed multivariate normal returns. }
\label{tab:sim_pnormal}
{\begin{tabular}{| c | c | c | c c c c c |}
  \hline
  \makecell{VaR \\level} & \makecell{Portfolio\\ size} & \makecell{Basel \\ zone} & \makecell{VS\\(4,2,0)} & \makecell{VS\\(4,0,0)} & EB & Sample & \makecell{DCC-GARCH\\(1,1)}\\
  \hhline{|=|=|=|=====|}
    \multirow{9}{*}{97.5 \%} & \multirow{4}{*}{5} & Green & 1.00  & 1.00  & 0.97 & 0.97 & 0.94 \\
    & & Amber & 0.00 & 0.00  & 0.03 & 0.03 & 0.06 \\
    & & Red & 0.00 & 0.00  & 0.00 & 0.00 & 0.00 \\
    \cline{2-8}
    & \multirow{4}{*}{10} & Green & 1.00 & 0.99 & 0.97 & 0.96 & 0.95 \\
    & & Amber & 0.00 & 0.01 & 0.03 & 0.04 & 0.05\\
    & & Red & 0.00 & 0.00 & 0.00 & 0.00 & 0.00 \\
    \cline{2-8}
    & \multirow{4}{*}{15} & Green & 0.99 & 0.98 & 0.96 & 0.94 & 0.92 \\
    & & Amber & 0.01 & 0.02 & 0.04 & 0.05 & 0.08\\
    & & Red & 0.00 & 0.00 & 0.00 & 0.01 & 0.00\\
    \cline{1-8}
    \multirow{9}{*}{99 \%} & \multirow{4}{*}{5} & Green & 0.97  & 0.95  & 0.90 & 0.90 & 0.74\\
    & & Amber & 0.03 & 0.05  & 0.10 & 0.10 & 0.26 \\
    & & Red & 0.00 & 0.00  & 0.00 & 0.00  & 0.00 \\
    \cline{2-8}
    & \multirow{4}{*}{10} & Green & 0.97 & 0.93 & 0.81 & 0.80 & 0.82\\
    & & Amber & 0.03 & 0.07 & 0.19 & 0.20 & 0.17 \\
    & & Red & 0.00 & 0.00 & 0.01 & 0.00 & 0.01\\
    \cline{2-8}
    & \multirow{4}{*}{15} & Green & 0.94 & 0.90 & 0.81 & 0.79 & 0.78\\
    & & Amber & 0.06 & 0.10 & 0.19 & 0.21 & 0.22 \\
    & & Red & 0.00 & 0.00 & 0.00 & 0.00 & 0.00\\
    \cline{1-8}
\end{tabular}}{}
\end{mdframed}
\end{table}
\end{singlespace}

The simulation results using the MGARCH scenario are presented in Table \ref{tab:sim_dcc}. We see that the method based on the volatility sensitive conjugate prior seems to slightly overestimate the risk whereas the other methods seem to underestimate the risk. It is interesting to note that even the DCC-GARCH(1,1) method, which in this case corresponds to the true underlying process, does not perform very well when the portfolio size is large in relation to the sample size. The latter result can be explained by the increasing impact of the estimation error on the VaR forecast when the portfolio size, i.e., the number of model parameters, becomes large.

\begin{singlespace}
\begin{table}[H]
\footnotesize
\begin{mdframed}[backgroundcolor=black!10,rightline=false,leftline=false]
\centering
\caption{Proportion of VaR estimations classified as 'Green', 'Amber' and 'Red' in the Basel backtest using simulated MGARCH returns. }
\label{tab:sim_dcc}
{\begin{tabular}{| c | c | c | c c c c c |}
  \hline
  \makecell{VaR \\level} & \makecell{Portfolio\\ size} & \makecell{Basel \\ zone} & \makecell{VS\\(4,2,0)} & \makecell{VS\\(4,0,0)} & EB & Sample & \makecell{DCC-GARCH\\(1,1)}\\
  \hhline{|=|=|=|=====|}
    \multirow{9}{*}{97.5 \%} & \multirow{4}{*}{5} & Green & 1.00  & 0.95  & 0.89 & 0.89 & 0.97 \\
    & & Amber & 0.00 & 0.05  & 0.08 & 0.08 & 0.03 \\
    & & Red & 0.00 & 0.00  & 0.03 & 0.03 & 0.00 \\
    \cline{2-8}
    & \multirow{4}{*}{10} & Green & 0.99 & 0.98 & 0.92 & 0.91 & 0.97 \\
    & & Amber & 0.01 & 0.02 & 0.05 & 0.06 & 0.03\\
    & & Red & 0.00 & 0.00 & 0.03 & 0.03 & 0.00 \\
    \cline{2-8}
    & \multirow{4}{*}{15} & Green & 0.98 & 0.97 & 0.89 & 0.85 & 0.91\\
    & & Amber & 0.02 & 0.03 & 0.09 & 0.12 & 0.09 \\
    & & Red & 0.00 & 0.00 & 0.02 & 0.03 & 0.00\\
    \cline{1-8}
    \multirow{9}{*}{99 \%} & \multirow{4}{*}{5} & Green & 1.00  & 0.97  & 0.75 & 0.73 & 0.91\\
    & & Amber & 0.00 & 0.03  & 0.19 & 0.21 & 0.09 \\
    & & Red & 0.00 & 0.00  & 0.06 & 0.06  & 0.00 \\
    \cline{2-8}
    & \multirow{4}{*}{10} & Green & 0.95 & 0.93 & 0.82 & 0.80 & 0.88\\
    & & Amber & 0.05 & 0.07 & 0.14 & 0.16 & 0.12 \\
    & & Red & 0.00 & 0.00 & 0.04 & 0.04 & 0.00\\
    \cline{2-8}
    & \multirow{4}{*}{15} & Green & 0.96 & 0.91 & 0.75 & 0.74 & 0.86\\
    & & Amber & 0.04 & 0.09 & 0.22 & 0.23 & 0.14 \\
    & & Red & 0.00 & 0.00 & 0.03 & 0.03 & 0.00\\
    \cline{1-8}
\end{tabular}}{}
\end{mdframed}
\end{table}
\end{singlespace}

\section{Empirical analysis}
We now present how the new volatility sensitive conjugate method performs compared to other methods when using real market data from 2019 and 2020.

\subsection{Setup of empirical study}
In the empirical comparison we use daily stock returns from randomly selected stocks in the S\&P 500 index. As in the simulation study, we consider portfolios of three different sizes, namely 5, 10 and 15. The assets in the portfolios are selected at random to avoid picking assets that are not representative for the rest of the market. For each portfolio size, 100 groups of assets are selected from S\&P 500 at the beginning of either 2019 or 2020, and 100 equally weighted portfolios are constructed that remain unchanged throughout the year. For each portfolio, we estimate VaR on a daily basis using a rolling window size of $n=250$ and we consider the same estimation methods as in the simulation study. At the end of the year we calculate the number of VaR exceedances at the 97.5 \% and 99 \% levels and classify the result according to the Basel backtest. 

The years 2019 and 2020 were selected for the empirical comparison since they are two very recent years. Moreover, the former can be considered a typical stable year, whereas the latter includes a very turbulent period in March due to the Covid-19 outbreak. Hence they should theoretically be representative for different kinds of market conditions.

\subsection{Empirical results}
Table \ref{tab:empirical_2019} shows the proportion of portfolios classified in the 'Green', 'Amber' and 'Red' zones using the different VaR estimation methods in 2019. As can be seen in Table \ref{tab:empirical_2019}, all of the methods perform quite similarly during 2019 with respect to the Basel backtest. All of them show good results at the $97.5 \%$ VaR level whereas the result looks slightly worse at the $99 \%$ level. Especially the DCC-GARCH(1,1) method seems to underestimate the risk at this level during this year by having many portfolios in the 'Amber' zone. 

\begin{singlespace}
\begin{table}[H]
\footnotesize
\begin{mdframed}[backgroundcolor=black!10,rightline=false,leftline=false]
\centering
\caption{Proportion of VaR estimations classified as 'Green', 'Amber' and 'Red' in the Basel backtest using different portfolio sizes and VaR estimation methods in 2019. }
\label{tab:empirical_2019}
{\begin{tabular}{| c | c | c | c c c c c |}
  \hline
  \makecell{VaR \\level} & \makecell{Portfolio\\ size} & \makecell{Basel \\ zone} & \makecell{VS\\(4,2,0)} & \makecell{VS\\(4,0,0)} & EB & Sample & \makecell{DCC-GARCH\\(1,1)}\\
  \hhline{|=|=|=|=====|}
    \multirow{9}{*}{97.5 \%} & \multirow{4}{*}{5} & Green & 1.00  & 0.99  & 1.00 & 1.00 & 0.97 \\
    & & Amber & 0.00 & 0.01  & 0.00 & 0.00 & 0.03 \\
    & & Red & 0.00 & 0.00  & 0.00 & 0.00 & 0.00 \\
    \cline{2-8}
    & \multirow{4}{*}{10} & Green & 0.99 & 0.99 & 1.00 & 0.99 & 0.98 \\
    & & Amber & 0.01 & 0.01 & 0.00 & 0.01 & 0.02\\
    & & Red & 0.00 & 0.00 & 0.00 & 0.00 & 0.00 \\
    \cline{2-8}
    & \multirow{4}{*}{15} & Green & 1.00 & 1.00 & 1.00 & 1.00 & 1.00 \\
    & & Amber & 0.00 & 0.00 & 0.00 & 0.00 & 0.00\\
    & & Red & 0.00 & 0.00 & 0.00 & 0.00 & 0.00\\
    \cline{1-8}
    \multirow{9}{*}{99 \%} & \multirow{4}{*}{5} & Green & 0.78  & 0.75  & 0.81 & 0.79 & 0.50\\
    & & Amber & 0.22 & 0.25  & 0.19 & 0.21 & 0.50 \\
    & & Red & 0.00 & 0.00  & 0.00 & 0.00  & 0.00 \\
    \cline{2-8}
    & \multirow{4}{*}{10} & Green & 0.74 & 0.72 & 0.76 & 0.75 & 0.47\\
    & & Amber & 0.26 & 0.28 & 0.24 & 0.25 & 0.53 \\
    & & Red & 0.00 & 0.00 & 0.00 & 0.00 & 0.00\\
    \cline{2-8}
    & \multirow{4}{*}{15} & Green & 0.86 & 0.84 & 0.77 & 0.70 & 0.55\\
    & & Amber & 0.14 & 0.16 & 0.23 & 0.30 & 0.45 \\
    & & Red & 0.00 & 0.00 & 0.00 & 0.00 & 0.00\\
    \cline{1-8}
\end{tabular}}{}
\end{mdframed}
\end{table}
\end{singlespace}

Figure \ref{fig:VaR_returns_2019} further illustrates how the models behave by showing how the returns and VaR estimates evolve over time in 2019 for a single portfolio of size 10. Figure \ref{fig:VaR_returns_2019} confirms the similarity between the different methods in 2019. However, although all of the VaR estimates are quite similar, the methods based on the volatility sensitive conjugate prior and the DCC-GARCH(1,1) give rise to more time varying estimates. 

\begin{figure}[H]
\begin{mdframed}[backgroundcolor=black!10,rightline=false,leftline=false]
    \centering
    \includegraphics[width=0.85\linewidth]{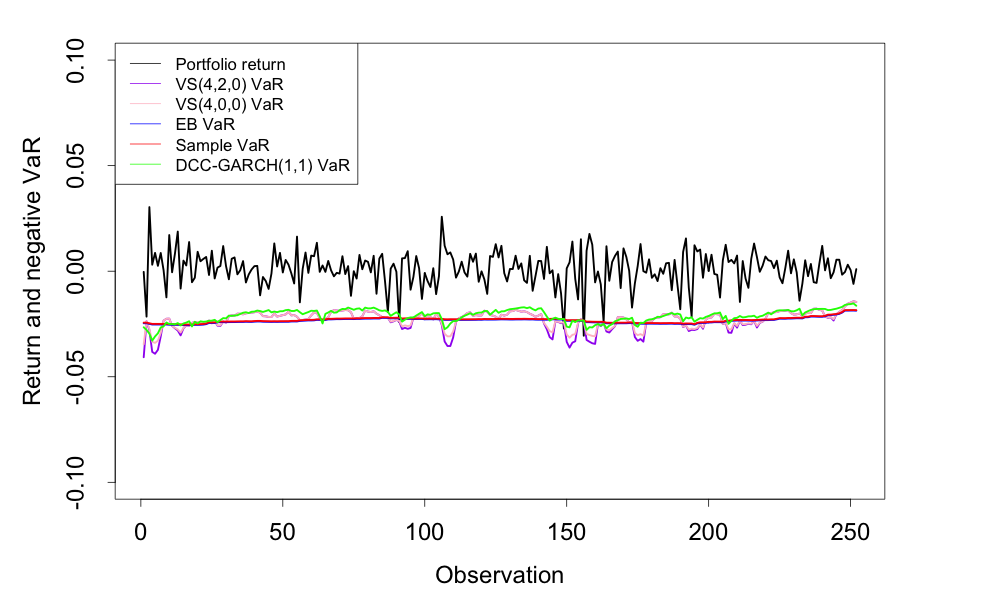}
    \caption{Returns and negative VaR estimates at the 99 \% level for one portfolio of size 10 in 2019.}
    \label{fig:VaR_returns_2019}
 \end{mdframed}
\end{figure}

Table \ref{tab:empirical_2020} shows how the different models compare during a more turbulent year as in 2020. We observe a clear advantage of the methods based on the volatility sensitive conjugate prior, particularly VS(4,2,0). At the 97.5 \% level, almost all of the test results are classified as 'Green' when using this method whereas the other methods result in a majority of the results being classified as 'Amber' or 'Red'. At the 99 \% level we see that the new method gives rise to some 'Amber' results, but still none of the results are classified as 'Red'. The other methods have basically no results in the 'Green' zone at the 99 \% level. Even the DCC-GARCH(1,1) method, which is also a heteroscedastic model, fails to estimate VaR during very turbulent periods. 

\begin{singlespace}
\begin{table}[H]
\footnotesize
\begin{mdframed}[backgroundcolor=black!10,rightline=false,leftline=false]
\centering
\caption{Proportion of VaR estimations classified as 'Green', 'Amber' and 'Red' in the Basel backtest using different portfolio sizes and VaR estimation methods in 2020. }
\label{tab:empirical_2020}
{\begin{tabular}{| c | c | c | c c c c c |}
  \hline
  \makecell{VaR \\level} & \makecell{Portfolio\\ size} & \makecell{Basel \\ zone} & \makecell{VS\\(4,2,0)} & \makecell{VS\\(4,0,0)} & EB & Sample & \makecell{DCC-GARCH\\(1,1)}\\
  \hhline{|=|=|=|=====|}
    \multirow{9}{*}{97.5 \%} & \multirow{4}{*}{5} & Green & 1.00  & 0.85  & 0.00 & 0.00 & 0.12 \\
    & & Amber & 0.00 & 0.15  & 0.76 & 0.74 & 0.88 \\
    & & Red & 0.00 & 0.00  & 0.24 & 0.26 & 0.00 \\
    \cline{2-8}
    & \multirow{4}{*}{10} & Green & 1.00 & 0.88 & 0.00 & 0.00 & 0.09 \\
    & & Amber & 0.00 & 0.12 & 0.82 & 0.80 & 0.89\\
    & & Red & 0.00 & 0.00 & 0.18 & 0.20 & 0.02\\
    \cline{2-8}
    & \multirow{4}{*}{15} & Green & 1.00 & 0.91 & 0.00 & 0.00 & 0.03 \\
    & & Amber & 0.00 & 0.09 & 0.84 & 0.77 & 0.94\\
    & & Red & 0.00 & 0.00 & 0.16 & 0.23 & 0.03\\
    \cline{1-8}
    \multirow{9}{*}{99 \%} & \multirow{4}{*}{5} & Green & 0.63  & 0.13  & 0.00 & 0.00 & 0.01\\
    & & Amber & 0.37 & 0.85  & 0.02 & 0.02 & 0.69 \\
    & & Red & 0.00 & 0.02  & 0.98 & 0.98  & 0.30 \\
    \cline{2-8}
    & \multirow{4}{*}{10} & Green & 0.58 & 0.06 & 0.00 & 0.00 & 0.00\\
    & & Amber & 0.42 & 0.94 & 0.00 & 0.00 & 0.66 \\
    & & Red & 0.00 & 0.00 & 1.00 & 1.00 & 0.34\\
    \cline{2-8}
    & \multirow{4}{*}{15} & Green & 0.47 & 0.00 & 0.00 & 0.00 & 0.00\\
    & & Amber & 0.53 & 1.00 & 0.00 & 0.00 & 0.43 \\
    & & Red & 0.00 & 0.00 & 1.00 & 1.00 & 0.57\\
    \cline{1-8}
\end{tabular}}{}
\end{mdframed}
\end{table}
\end{singlespace}

Figure \ref{fig:VaR_returns_2020} provides a deeper insight into how the different models behave by showing how the returns and VaR estimates evolve over time in 2020 for a single portfolio of size 10. It shows that the new method can quickly adapt to changing market conditions as in March 2020. The homoscedastic models are not at all suitable in such circumstances, indicated by their very moderate increase in VaR. The DCC-GARCH(1,1) model seems to scale up VaR during the turbulent period but evidently not enough. However, it is interesting to note that the VaR estimates based on the new method and the DCC-GARCH(1,1) method behave very synchronously. 

\begin{figure}[H]
\begin{mdframed}[backgroundcolor=black!10,rightline=false,leftline=false]
    \centering
    \includegraphics[width=0.85\linewidth]{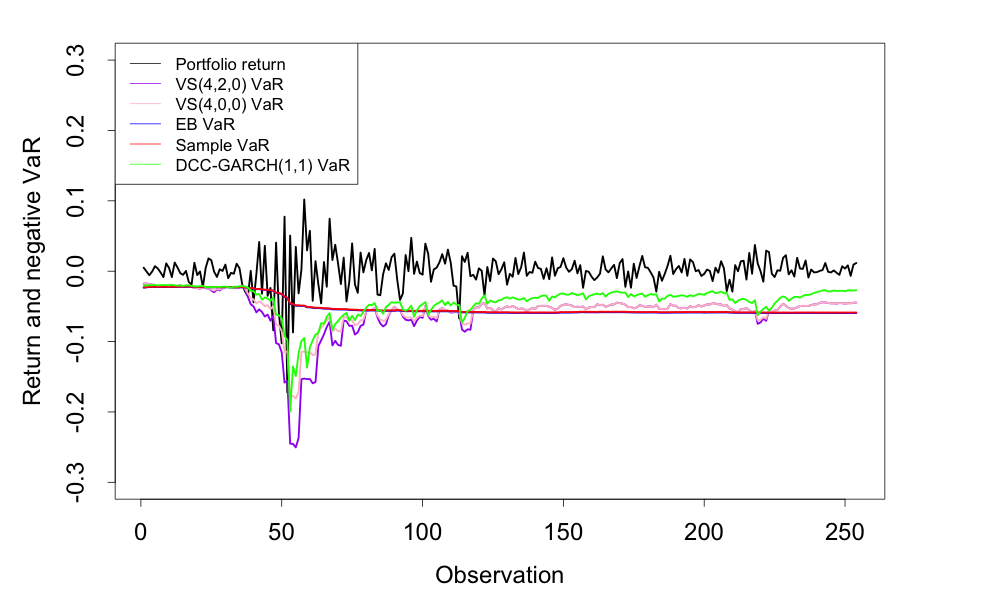}
    \caption{Returns and negative VaR estimates at the 99 \% level for one portfolio of size 10 in 2020.}
    \label{fig:VaR_returns_2020}
 \end{mdframed}
\end{figure}

To sum up, during a stable year such as 2019 all of the considered methods perform quite well in terms of VaR estimation. Due to the lack of rapid changes in the market conditions, methods that account for heteroscedacity are not advantageous. However, during turbulent years such as 2020, we see a clear advantage of heteroscedastic models. In terms of VaR estimation, the new method based on the volatility sensitive conjugate prior gives the best results.

\section{Conclusion}\label{discussion}
When working with financial data it is important to keep in mind that there exists no perfect model. A model can only be considered 'better' or 'worse' based on historical data and some statistical measures. Many of the standard statistical tests would reject several of the models used by practitioners, but that does not necessarily mean that the models that they use are useless. In reality, usefulness is a tradeoff between accuracy, simplicity and speed.

In this paper, we have suggested a new way to specify the hyperparameters in the conjugate prior which makes it possible to capture volatility clustering in financial data. The new method is simple and very fast. Moreover, it solves the problem of certainty specification by automatically setting the degree of belief depending on how risk cautious the investor is. It has been illustrated using both simulated and real market data that the new method estimates VaR very accurately when using the Basel backtest. Since CVaR is backtested in terms of VaR, this also qualifies the model for providing good CVaR estimates according to Basel.

Although the new method shows promising results in terms of VaR estimation, it is likely possible to improve it by considering dynamic correlations in the specification of the prior. This would lead to an even more flexible model.   

\section*{Acknowledgment}
This research was partly supported by the Swedish Research Council (VR) via the project ``Bayesian Analysis of Optimal Portfolios and Their Risk Measures''.

\bibliography{references}

\end{document}